\documentclass[11pt,a4paper]{article}
\usepackage[margin=3.2cm]{geometry}
\usepackage{amsmath,amssymb,amsthm,mathtools}
\usepackage{microtype,booktabs}
\usepackage[hidelinks]{hyperref}
\usepackage{authblk}

\newtheorem{theorem}{Theorem}[section]
\newtheorem{proposition}[theorem]{Proposition}
\newtheorem{lemma}[theorem]{Lemma}
\theoremstyle{remark}
\newtheorem{remark}[theorem]{Remark}

\newcommand{\dist}{d}
\newcommand{\ecc}{D}
\newcommand{\rad}{\operatorname{rad}}
\newcommand{\diam}{\operatorname{diam}}
\newcommand{\supp}{\operatorname{supp}}
\newcommand{\PoA}{\operatorname{PoA}}
\newcommand{\SC}{\operatorname{SC}}
\newcommand{\OPT}{\operatorname{OPT}}
\newcommand{\MaxGame}{\textup{\textsc{MaxGame}}}

\title{The price of anarchy in the max-distance\\
network creation game is not constant\thanks{All original results and proofs in this paper were found by OpenAI’s GPT Astra model. The author verified the results and proofs, edited the manuscript, and takes responsibility for its content.
}}
\author{Christoph Schlegel}
\affil{Flashbots, Zurich, Switzerland}
\date{}

\begin{document}
\maketitle

\begin{abstract}
At edge price $\alpha=1$, we construct an infinite family of pure Nash
equilibria of the unilateral max-distance network creation game with
$\PoA\ge2^{\sqrt{\log_2 n}-O(\log\log n)}$. Together with the known upper
bound, this gives $2^{\Theta(\sqrt{\log n})}$ along the constructed sequence
of population sizes. We subdivide every edge of the bipartite double cover
of a distance-uniform graph with large diameter constructed by Lavrov,
Loh and Messegu\'e, and let each subdivision vertex buy its two incident
edges. A distance calculation rules out every profitable unilateral
deviation. The equilibria are not strict. We also give a short proof that
the price of anarchy is constant for every polynomially vanishing edge
price.
\end{abstract}

\section{Introduction}\label{sec:intro}

In the network creation game of Fabrikant et al.~\cite{Fabrikant2003}, each
player buys incident edges and pays for both construction and distances in
the resulting graph. Demaine et al.~\cite{Demaine2012} introduced the
max-distance variant, \MaxGame, in which the usage cost is a player's
eccentricity. We show that its price of anarchy is not bounded by a constant.

\begin{theorem}\label{thm:poa}
There is a strictly increasing sequence of integers $(n_k)_{k\ge2}$ such that
\[
 \PoA(n_k,1)\ge
 2^{\sqrt{\log_2 n_k}-O(\log\log n_k)}.
\]
Consequently, along this sequence,
$\PoA(n_k,1)=2^{\Theta(\sqrt{\log n_k})}$, where the upper bound
$2^{O(\sqrt{\log n})}$ is from
Mihal\'ak and Schlegel~\cite{MihalakSchlegel2013}.
\end{theorem}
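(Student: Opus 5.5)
Let $H$ be a distance-uniform graph of Lavrov, Loh and Messegu\'e: $N$ vertices, every vertex at distance exactly $d$ from at least a $(1-\epsilon)$-fraction of the others, and $d\ge 2^{\sqrt{\log_2 N}-O(\log\log N)}$. The plan has three steps.

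\textbf{Construction.} Take the bipartite double cover $B=H\times K_2$, with vertices $(v,i)$, $i\in\{0,1\}$, and $(u,0)\sim(v,1)$ iff $uv\in E(H)$. Subdivide every edge of $B$ by a new vertex $s$, and let $s$ buy its two incident edges; the original vertices buy nothing. This gives the graph $G$ with $n_k$ vertices, $n_k=\Theta(N\cdot|E(H)|)$, polynomial in $N$.

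\textbf{Distances.} I will compute eccentricities in $G$. Distances in $B$ satisfy $d_B((u,i),(v,j))=$ the shortest walk in $H$ from $u$ to $v$ whose length has parity $i+j$. The double cover is used so that $B$ is bipartite and all subdivision paths have even length. Distances in $G$ are then twice the distances in $B$, plus $1$ at subdivision endpoints. The key fact is that every vertex of $G$ has eccentricity between $2d$ and roughly $2d+O(1)$. For the lower bound: if $d$ is even, vertices $(w,1)$ with $d_H(v,w)=d$ are at $B$-distance at least $d+1$ from $(v,0)$ by parity. If the parities go the other way, the same holds with roles swapped. Either way distance uniformity gives $D\ge 2d-O(1)$, and the upper bound on the diameter is similar. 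Then
\[
\SC(G)\approx |E(G)|+n_k\cdot 2d .
\]
A star costs $O(n_k)$, so $\PoA\ge\Omega(d)=2^{\sqrt{\log_2 N}-O(\log\log N)}$, and $\log n_k=\Theta(\log N)$ absorbs into the error term. The matching upper bound is quoted from Mihal\'ak--Schlegel.

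\textbf{Nash condition (main obstacle).} With $\alpha=1$, each player's cost is (edges bought) $+$ eccentricity. The argument splits by type of player.

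An original vertex buys nothing. If it buys $k\ge1$ edges to targets $t_1,\dots,t_k$, its eccentricity can drop only if some target set hits balls so that everything is within distance $D-k-1$. Distance uniformity says a $(1-\epsilon)$-fraction of vertices lie on the "sphere" at distance $\approx 2d$ from almost every vertex. Each target $t_j$ covers, within radius $r$, only vertices at distance at most $r$ from $t_j$. Such a ball contains at most an $\epsilon$-fraction of the vertices unless $r\ge 2d-O(1)$. So $k$ edges cannot reduce eccentricity by more than about $1$ per edge.

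A subdivision vertex may drop or swap its edges. Dropping an edge disconnects the graph or raises its eccentricity by at least $1$, since it then reaches the far side only through its remaining neighbor, which pays off exactly. I must check that eccentricity rises by at least the saved cost $1$.

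The delicate part is making these $\pm1$ slacks exact, which is why the equilibria are not strict. This needs the precise parity structure of $B$ and exact distance-uniformity (every vertex has its far set at exactly distance $d$). Equality cases must be handled carefully, possibly by choosing the parity of $d$ appropriately. I expect this equality bookkeeping to be the main difficulty.
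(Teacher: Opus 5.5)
\textbf{There is a genuine gap: the equilibrium verification is left open, and your ball argument cannot close it.} Your construction (bipartite double cover of an LLM graph, every edge subdivided, subdivision vertices buying both edges) and your final count (social cost about $n_kR$ against a star of cost $3n_k-2$) match the paper. But the equilibrium check is the actual content of the theorem.

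At $\alpha=1$ there is no slack. With $R=2d+2$, you must show two things. A non-buying player who buys $t\ge1$ edges must end with eccentricity at least $R-t$. A buying player who switches to $t$ edges must end with eccentricity at least $R+2-t$. Already for $t=1$ the margin is zero (Remark~\ref{rem:whyparity}).

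Your union bound over exceptional sets only yields ``at least $2d-O(1)$''. Even that needs a condition like $(t+1)\epsilon<1$ for all $t$ up to $R$, which you never impose. The conclusion ``about $1$ per edge'' does not follow from it either.

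The missing idea is a single target that is at the \emph{exact} extremal distance simultaneously from $u$ and from every endpoint of an arbitrary set $S$ with $|S|\le R$. The paper gets this from the alphabet size $a=2k(R+1)+3$:
\begin{itemize}
\item The supports of $S\cup\{u\}$ use at most $2k(R+1)<a-2$ symbols.
\item So two unused symbols give original vertices of both bits and a subdivision vertex, all with support disjoint from every vertex of $S\cup\{u\}$ (Lemma~\ref{lem:fresh}).
\item Disjoint support fixes the $G$-distances at exactly $R$, $R-2$, $R-1$, $R$ for the four type combinations (Lemma~\ref{lem:table}).
\end{itemize}
The deviation bounds then follow directly:
\begin{itemize}
\item A non-buying $u$ reaches a fresh same-bit original vertex in at least $\min\{R,1+(R-2)\}=R-1$ steps.
\item A buying $u$ with $S=\{x\}$ reaches a fresh vertex of $x$'s type and bit in at least $R+1$ steps.
\item A buying $u$ with $|S|\ge2$ reaches a fresh subdivision vertex in at least $1+(R-1)=R$ steps.
\end{itemize}
What makes this work is counting unused symbols, rather than summing exceptional fractions over up to $R+1$ vertices. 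A union-bound version would need $\epsilon$ of order $1/R$ built into the choice of $H$.

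\textbf{All eccentricities must equal $R$ exactly.} The costs $R$ and $R+2$ are the benchmarks for every deviation, and ``the upper bound on the diameter is similar'' does not supply this. Distance-uniformity controls neither the exceptional vertices, which may lie beyond the critical distance, nor walks of the opposite parity, which is what the double cover needs. The paper uses the specific structure of $H_{a,k}$: its diameter is exactly $d$ (Lemma~\ref{lem:hanoifacts}), and adjustment triangles let a walk switch parity at cost at most three (Lemma~\ref{lem:parity}).

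\textbf{The vertex count and exponent need fixing.} $G$ has $2N+2|E(H)|=(a+1)N$ vertices, not $\Theta(N\cdot|E(H)|)$. Also, $\log n_k=\Theta(\log N)$ alone would not preserve the coefficient $1$ in front of $\sqrt{\log_2 n_k}$. You need $\log_2 n_k=\log_2 N+O(\sqrt{\log N})$, which holds because $\log_2 a=k+O(\log k)$ while $\log_2 N=\Theta(k^2)$.
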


Our construction uses the distance-uniform graphs with large diameter
constructed by Lavrov, Loh and Messegu\'e~\cite{LLM}, which we call
\emph{Lavrov--Loh--Messegu\'e graphs}, or \emph{LLM graphs} for short.
Their vertices are words, and words using disjoint
sets of symbols are maximally distant. Passing to the bipartite double
cover and subdividing every edge produces a graph in which all
eccentricities are equal. Subdivision vertices buy both incident edges;
the other vertices buy nothing. For any potentially profitable deviation,
unused symbols supply a target vertex that remains sufficiently far away. The
resulting equilibrium relies on indifference at $\alpha=1$.

We give the construction and verify arbitrary deviations in
Sections~\ref{sec:construction}--\ref{sec:deviations} and thereby prove Theorem~\ref{thm:poa}. Appendix~\ref{app:hanoi} supplies
a self-contained proof of the required LLM metric facts.

\begin{table}[t]
\centering
\begin{tabular}{@{}lccccccccc@{}}
\toprule
$\alpha$
 & $0$ & & $\tfrac{2}{n-2}$ & & $n^{-\varepsilon}$ & & $19$ & & $\infty$\\
\midrule
$\PoA\le$
 & & $2$
 & & $2+\tfrac53(4^{\lceil1/\varepsilon\rceil}-1)$
 & & $2^{O(\sqrt{\log n})}$
 & & $3$ & \\
 & & {\footnotesize\cite{MihalakSchlegel2013}}
 & & {\footnotesize Prop.~\ref{prop:small}}
 & & {\footnotesize\eqref{eq:uniform}}
 & & {\footnotesize\cite{Wang2022}} & \\
\bottomrule
\end{tabular}
\caption{Upper bounds on the price of anarchy of $\MaxGame$ along the edge
price. The bound \eqref{eq:uniform} holds for every $\alpha>0$ but is weaker
than the entry shown in the outer two intervals.}\label{tab:regimes}
\end{table}

\section{Model and upper bounds}\label{sec:prelim}

There are $n\ge3$ players. Player $v$ chooses
$s_v\subseteq V\setminus\{v\}$, buying an undirected edge to each member
at price $\alpha>0$. The induced graph $G=G(s)$ gives $v$ the cost
\[
 c_v(s)=\alpha|s_v|+\ecc_G(v),
 \qquad \ecc_G(v)=\max_{w\in V}\dist_G(v,w),
\]
with infinite usage cost if $G$ is disconnected. A \textbf{pure Nash
equilibrium} admits no strictly profitable unilateral change of strategy.
Write $\SC(s)=\sum_v c_v(s)$, let $\OPT(n,\alpha)$ be the minimum social
cost, and let $\PoA(n,\alpha)$ be the largest equilibrium cost divided by
$\OPT(n,\alpha)$. Every equilibrium is connected and has each edge bought
exactly once. We therefore write $\SC(G)$ for its social cost. Finally,
$\rad(G)=\min_v\ecc_G(v)$ and $\diam(G)=\max_v\ecc_G(v)$.

For $\alpha\ge2/(n-2)$, the radius estimate of Mihal\'ak and
Schlegel~\cite[Corollary~1]{MihalakSchlegel2013} gives every equilibrium
graph $G$ the bound
\begin{equation}\label{eq:radius}
 \frac{\SC(G)}{\OPT(n,\alpha)}
 \le2+\frac{2\rad(G)}{\alpha+2}<2+\rad(G).
\end{equation}
For $\alpha<2/(n-2)$, their Theorem~2 gives $\PoA\le2$. Combining
\eqref{eq:radius} with the diameter estimate
$O\bigl((1+\alpha)4^{\sqrt{\log_2 n}}\bigr)$ from the argument of
Demaine et al.~\cite[Theorem~21]{Demaine2012} yields the uniform bound
\begin{equation}\label{eq:uniform}
 \PoA(n,\alpha)=2^{O(\sqrt{\log n})}
 \qquad(\alpha>0).
\end{equation}
For polynomially vanishing prices, their growth lemma gives a constant
bound directly.

\begin{proposition}\label{prop:small}
For $\varepsilon>0$ and $m=\lceil1/\varepsilon\rceil$,
\[
 \PoA(n,\alpha)<2+\frac53(4^m-1)
 \qquad\text{if }\alpha\le n^{-\varepsilon}.
\]
In particular, $\PoA(n,\alpha)=O(4^{1/\varepsilon})$ in this range, and
it is constant for $\alpha=O(n^{-\varepsilon})$ with fixed $\varepsilon>0$.
\end{proposition}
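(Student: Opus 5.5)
We combine the radius estimate \eqref{eq:radius} with a radius bound obtained from the growth lemma of Demaine et al.\ for \MaxGame. If $\alpha<2/(n-2)$, then Theorem~2 of Mihal\'ak and Schlegel gives $\PoA\le2$ and there is nothing to show. So assume $2/(n-2)\le\alpha\le n^{-\varepsilon}$. By \eqref{eq:radius} it then suffices to prove that every equilibrium graph $G$ satisfies
\[
 \frac{2\rad(G)}{\alpha+2}<\tfrac53(4^m-1).
\]
Since $\alpha>0$, it is enough to show $\rad(G)<\tfrac53(4^m-1)$, or a comparable bound whose constant works out after dividing by $\alpha+2>2$.

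The engine is the growth lemma in the form used for Demaine et al.~\cite[Theorem~21]{Demaine2012}. In an equilibrium with $\alpha<1$, any vertex whose ball of some radius is small can buy an edge to a well-chosen centre. This edge reduces its eccentricity by at least $1$, and the purchase costs only $\alpha<1$. Equilibrium forbids this deviation, so balls must grow. Quantitatively, I expect the lemma to say that if $\diam$ (or eccentricity) exceeds roughly $4r$, then for every vertex $u$
\[
 |B(u,4r)|\ \ge\ \frac{1}{\alpha}\,|B(u,r)|
\]
up to constants, or equivalently $|B(u,r_{i+1})|\ge n^{\varepsilon}|B(u,r_i)|$ with $r_{i+1}\approx 4r_i+c$. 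The recursion $r_0=1$, $r_{i+1}=4r_i+5$ (say) gives $r_i=\tfrac53(4^{i+1}-1)-\dots$, which is exactly where the geometric sum $\tfrac53(4^m-1)$ should come from. I will pin the additive constant by matching this closed form.

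Iterating $m=\lceil1/\varepsilon\rceil$ times from $|B(u,r_0)|\ge1$, we get
\[
 |B(u,r_m)|\ \ge\ (1/\alpha)^m\ \ge\ n^{\varepsilon m}\ \ge n,
\]
as long as the growth hypothesis holds at each stage. Since no ball can exceed $n$ vertices, the hypothesis must fail at some stage $i\le m$. The failure means some vertex has eccentricity at most the current radius, so $\rad(G)\le r_m$. Strictness in the final bound needs care, for example by getting $>n$ with a $+1$ from the centre, or by using $\alpha+2>2$ slack. Plugging $\rad(G)\le r_m$ into \eqref{eq:radius} gives the claim.

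The main obstacle is stating the growth lemma precisely enough that the constants give exactly $\tfrac53(4^m-1)$. I need to settle three things: whether the multiplicative factor is $1/\alpha$ or something like $(1+1/\alpha)$, whether the radius step is $r\mapsto 4r+5$ or another affine map, and whether the conclusion is about $\rad$ or $\diam$. I would first re-derive the deviation argument directly. Take a maximal set of vertices at pairwise distance $>2r$ inside $B(u,\cdot)$. Each of these vertices, if it bought an edge to $u$, would shrink its eccentricity. Equilibrium therefore gives $\alpha\cdot(\text{number})\ge$ gain, and disjointness of the $r$-balls yields the growth factor. I would then adjust $r_0$ and the additive term so the telescoped sum matches the stated constant.
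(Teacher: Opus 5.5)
\textbf{Verdict: there is a genuine gap.} Your outer skeleton is the paper's. You dispose of $\alpha<2/(n-2)$ by $\PoA\le2$, reduce via \eqref{eq:radius} to a bound on $\rad(G)$, and iterate a growth lemma $m$ times using $1/\alpha\ge n^{\varepsilon}$. Strictness comes free from $2\rad(G)/(\alpha+2)<\rad(G)$. The gap is the growth lemma itself: you leave it as a guess, and the derivation you sketch does not work.

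Letting each packing vertex buy an edge to $u$ is a collection of separate unilateral deviations. Equilibrium gives for each only $\alpha\ge$ (that player's own gain), never an inequality $\alpha|S|\ge$ gain. So no lower bound on $|S|$ and no factor $1/\alpha$ results. The same objection applies to your single edge to a centre: one edge at price $\alpha<1$ yields information like $\ecc(v)\le\rad(G)+1$, but no growth factor.

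The deviation that works is by $u$ alone. Take a maximal set $S$ of vertices at distance exactly $3t+1$ from $u$ with pairwise distances $>2t$, and let $u$ add edges to all of $S$. For every $w$ with $d(u,w)>3t+1$, route through the sphere point on a shortest path and the element of $S$ within $2t$ of it; this cuts $d(u,w)$ by at least $t$. So if $\ecc(u)\ge4t+1$, the eccentricity falls by at least $t$ at price $\alpha|S|$, forcing $|S|\ge t/\alpha$. The disjoint balls $B(s,t)$, $s\in S$, lie in $B(u,4t+1)$. Because these balls are centred at other vertices, the statement concerns $b(t)=\min_v|B(v,t)|$, not $|B(u,\cdot)|$ for a fixed $u$. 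It reads $\rad(G)>5t\Rightarrow b(4t+1)\ge\frac{t}{\alpha}b(t)$, which is the form the paper takes from Demaine et al.

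Your bookkeeping of the constant is also off: the constant comes from the threshold $5t$, not from the ball radius. Set $t_0=1$ and $t_{i+1}=4t_i+1$, so $t_i=(4^{i+1}-1)/3$. If $\rad(G)>5t_{m-1}$, all $m$ applications are valid. Starting from $b(1)\ge2$, this gives $b(t_m)\ge2\alpha^{-m}\ge2n^{\varepsilon m}>n$, a contradiction. Hence $\rad(G)\le5t_{m-1}=\tfrac53(4^m-1)$.

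Your step ``the hypothesis fails, so some vertex has eccentricity at most the current radius, hence $\rad(G)\le r_m$'' is not what a failed hypothesis gives. Failure at step $i$ yields only $\rad(G)\le5t_i$. Your recursion $r_{i+1}=4r_i+5$, $r_0=1$, solves to $r_i=(8\cdot4^i-5)/3$, and no choice of lower-order terms turns this into $\tfrac53(4^{i+1}-1)$. Choosing constants to match the target is not a derivation; with the correct lemma the stated bound falls out directly. You were right that the start must be $b(1)\ge2$ rather than $\ge1$ to get a strict $>n$.
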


\begin{proof}
Put $b(t)=\min_v|\{w:\dist_G(v,w)\le t\}|$ and $r=\rad(G)$.
The growth lemma of Demaine et al.~\cite[Lemma~20]{Demaine2012} implies
\[
 r>5t\quad\Longrightarrow\quad
 b(4t+1)\ge\frac{t}{\alpha}b(t)
 \qquad(t\ge1\text{ an integer}).
\]
Set $t_0=1$ and $t_{i+1}=4t_i+1$, so
$t_i=(4^{i+1}-1)/3$. If $r>5t_{m-1}$, iterating from $b(1)\ge2$ gives
\[
 b(t_m)\ge2\alpha^{-m}\ge2n^{\varepsilon m}>n,
\]
a contradiction. Thus $r\le5t_{m-1}$, and \eqref{eq:radius} or the
small-price bound $\PoA\le2$ proves the claim. Finally,
$Cn^{-\varepsilon}\le n^{-\varepsilon/2}$ for sufficiently large $n$.
\end{proof}

Other upper bounds from the literature are summarized in
Table~\ref{tab:regimes}.
\section{The construction}\label{sec:construction}

Fix $k\ge2$ and put
\begin{equation}\label{eq:parameters}
 d=2^k-1,\qquad R=2d+2=2^{k+1},\qquad a=2k(R+1)+3.
\end{equation}

\subsection{LLM graphs, double cover and subdivision}

For an alphabet $\mathcal A$ of size $a$, the \textbf{LLM graph}
$H=H_{a,k}$, after Lavrov, Loh and Messegu\'e~\cite{LLM}, has vertices (words)
\[
 x=(x_1,\ldots,x_k)\in\mathcal A^k,
 \qquad x_i\ne x_{i+1}\quad(1\le i<k).
\]
An \textbf{adjustment} of a word $x$ changes its last symbol (subject to the condition that the last symbol should be different from the penultimate symbol). An
\textbf{involution} of a word $x$ with last two symbols $q$ and $r$, takes its longest suffix alternating
between $q$ and $r$ and exchanges $q$ and $r$ throughout this suffix, leaving
the preceding symbols unchanged. For example, if $p,q,r$ are
distinct, the involution maps
$
 (p,q,r,q,r)$ to $(p,r,q,r,q)$.

Two words are adjacent in $H$ iff one is obtained from the other
by an adjustment or by an involution. The \textbf{support} $\supp(x)$ is the set of
symbols in $x$.\footnote{These graphs arise from the generalized Hanoi game
of Lavrov, Loh and Messegu\'e. The classical three-peg Hanoi graph is obtained
by using four symbols, one designated $0$, and restricting to words whose
first symbol is nonzero.}

\begin{lemma}[LLM distances]\label{lem:hanoifacts}
Every two vertices of $H$ are at distance at most $d$. Vertices with disjoint
supports are at distance exactly $d$.
\end{lemma}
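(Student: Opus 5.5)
The plan is to prove both parts by induction on the word length. The engine is the classical Tower-of-Hanoi recursion: the symbol in a given position can change only through an involution acting on the \emph{entire} suffix from that position onward. For $1\le j\le k$, a word splits into a prefix $x_1\cdots x_{k-j}$ and a suffix $x_{k-j+1}\cdots x_k$ of length $j$. I will work with paths that keep the prefix fixed and act only on the suffix.

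\emph{Upper bound.} The claim is that two words with a common prefix of length $k-j$ are joined by a prefix-preserving path of length at most $2^j-1$. For $j=1$ a single adjustment suffices, or no step at all if the words coincide.

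For the step from $j-1$ to $j$, let the two suffixes start with $s$ and $t$, and let $p$ be the last prefix symbol (if there is one). If $s=t$, apply the induction hypothesis to the shorter suffix. Otherwise:
\begin{enumerate}
\item Move the remaining $j-1$ symbols, by induction, to the alternating word $(t,s,t,\dots)$. This keeps position $k-j+1$ equal to $s$, so the suffix becomes $(s,t,s,t,\dots)$.
\item Apply one involution. Its alternating suffix is exactly these $j$ positions, because $p\ne s$ and $p\ne t$, both being first symbols of valid target words after $p$. The result is $(t,s,t,\dots)$.
\item Move, by induction, to the target suffix.
\end{enumerate}
This gives $2(2^{j-1}-1)+1=2^j-1$. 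The point to check is that every intermediate word is valid and that no involution in the recursive calls reaches into the fixed prefix. This holds because each recursive call is again of the same shape, and its prefix's last symbol differs from both of its alternation symbols. Taking $j=k$ gives $\diam(H)\le d$.

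\emph{Lower bound.} Let $x,y$ have disjoint supports and consider any path from $x$ to $y$. Since $y_1\notin\supp(x)$, the first symbol must change, so I decompose the path at involutions of the whole word. Let the step $u\to u'$ be the \emph{first} moment the first symbol leaves $\supp(x)$. Here $u=(p,q,p,\dots)$ with $p\in\supp(x)$ and $q\notin\supp(x)$.

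Symmetrically, let the step $v\to v'$ be the \emph{last} moment the first symbol enters $\supp(y)$. Here $v'=(r,w,r,\dots)$ with $r\in\supp(y)$ and $w\notin\supp(y)$.

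On the segment $x\rightsquigarrow u$, look at the last whole-word involution before $u$. From then on the first symbol is fixed, and the suffix of length $k-1$ must reach a word starting with $q$. The suffix at the start of that phase lies in $\supp(x)$, or it is an alternating word in two symbols of $\supp(x)$. So I want an inductive invariant of the form
\[
 \text{reaching, from a length-}j\text{ suffix } s, \text{ a suffix whose first symbol is outside } \supp(s)\text{ takes at least } 2^{j-1} \text{ steps.}
\]
Combining the segments $x\rightsquigarrow u$ and $v'\rightsquigarrow y$ with the involution(s) in between would then give $2^{k-1}+2^{k-1}-1=2^k-1$ when $u\to u'$ and $v\to v'$ coincide, and more otherwise.

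\emph{Main obstacle.} I expect the hard step to be finding the right invariant for the lower bound. The naive version ``the first symbol changes, so recurse on the suffix'' only yields a linear bound. The recursion only doubles if I show that the length-$(k-1)$ suffix must itself make its own first symbol leave its current support before $u$. In the Hanoi picture, the small tower must be moved aside onto a new peg.

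My first attempt will be the invariant: any prefix-preserving path from a suffix $s$ to a suffix containing a symbol of the new first letter $q\notin\supp(s)\cup\{p\}$ in the leading position has length at least $2^{j-1}$, proved by the same first-exit decomposition. If this fails because symbols of $\supp(x)$ recur inside alternating suffixes, I will strengthen it to a statement about reaching the alternating word $(q,p,q,\dots)$. That is the word the involution $u\to u'$ actually requires.
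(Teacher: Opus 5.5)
Your upper bound is correct and is essentially the paper's argument. You route the remaining coordinates to the alternating word on the two leading symbols, involute, and route on. Because the preceding symbol differs from both alternation symbols, no involution leaks into the fixed prefix.

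The gap is the lower bound, which is the substantive half of the lemma. You state an invariant but never prove it: from a length-$j$ word $s$, reaching a leading symbol outside $\supp(s)$ costs at least $2^{j-1}$ steps. The one-sided first-exit recursion you describe does not double. In the phase where the first symbol is fixed at $p$, it only shows that the length-$(j-1)$ suffix needs $2^{j-2}$ steps until its leading symbol first leaves its initial support. Adding the final whole-word involution gives $2^{j-2}+1<2^{j-1}$ for $j\ge 3$. Your top-level count hides this. Analysing $x\rightsquigarrow u$ through the length-$(k-1)$ suffix gives only $2^{k-2}$. The $2^{k-1}$ per half that you add up is the invariant at length $k$, i.e.\ exactly the missing inductive step.

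What is missing is a second, backward application inside that phase, not a stronger forward statement. The suffix must end at $(q,p,q,\ldots)$, whose support is $\{p,q\}$. It starts with a leading symbol $w_2$ satisfying $w_2\ne p$ (it follows $p$) and $w_2\ne q$ (it lies in $\supp(s)$). Applying the invariant to the reversed walk therefore costs another $2^{j-2}$ steps, measured back from the last time the leading symbol lies outside $\{p,q\}$. The leading symbol is never $p$ in this phase. So it equals $q$ after that time, and it differs from $q$ before the first exit. Hence the two segments share at most one step. This gives at least $2^{j-1}-1$ steps before the involution, hence $2^{j-1}$, and your route closes. The same disjointness argument shows that the first exit from $\supp(x)$ does not come after the last entry into $\supp(y)$, which your ``and more otherwise'' needs.

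The paper avoids this two-sided bookkeeping by peeling off the last coordinate instead of the first. For a closest disjoint-support pair, a shortest path alternates adjustments and involutions, and it begins and ends with adjustments. Deleting the last coordinate turns adjustments into stationary steps and involutions into moves of $H_{a,k-1}$ between disjoint-support words. Induction then gives at least $2^{k-1}-1$ involutions and therefore at least $2^{k-1}$ adjustments.
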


Appendix~\ref{app:hanoi} proves this lemma. Form the bipartite double cover
$B$ with vertex set $V(H)\times\{0,1\}$, joining $(x,0)$ to $(y,1)$ whenever
$x$ and $y$ are adjacent in $H$. Subdivide every edge of
$B$ once to obtain $G$. Call the vertices inherited from $B$
\textbf{original vertices} and the inserted vertices \textbf{subdivision
vertices}. We extend the notion of support to $G$ as follows. The original
vertices $(x,0)$ and $(x,1)$ have support $\supp(x)$, and the subdivision
vertex between $(x,0)$ and $(y,1)$ has support
$\supp(x)\cup\supp(y)$. Thus every vertex has support of size at most $2k$.
For $T\subseteq V(G)$, call a vertex $z$ \textbf{fresh for $T$} if
\[
 \supp(z)\cap\supp(x)=\varnothing
 \qquad\text{for every }x\in T.
\]

Orient every edge of $G$ from its subdivision endpoint toward its
original endpoint, and interpret the tail of an oriented edge as its buyer.
We call the players corresponding to subdivision vertices \textbf{buying
players} and those corresponding to original vertices \textbf{non-buying
players}.

\subsection{Distances and fresh targets}

\begin{lemma}[Parity distances]\label{lem:parity}
In $B$, the distance between $(x,\eta)$ and $(y,\theta)$ is at most $d+1$
if $\eta=\theta$, and at most $d$ otherwise. Equality holds in both cases
when $\supp(x)\cap\supp(y)=\varnothing$.
\end{lemma}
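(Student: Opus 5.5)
Two facts about the bipartite double cover drive the plan. Walks in $B$ project to walks in $H$, and walks in $H$ lift to walks in $B$ once the parity of the endpoint is fixed. Concretely, a walk $x=x_0,x_1,\dots,x_\ell=y$ in $H$ lifts uniquely to a walk in $B$ starting at $(x,\eta)$ and ending at $(y,\eta+\ell \bmod 2)$. Conversely, every walk in $B$ from $(x,\eta)$ to $(y,\theta)$ projects to a walk in $H$ of the same length, and that length $\ell$ satisfies $\ell\equiv\eta+\theta \pmod 2$. Hence $\dist_B((x,\eta),(y,\theta))$ is the minimum length of an $x$--$y$ walk in $H$ whose length has parity $\eta+\theta$. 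I will prove the upper bounds and the lower bounds separately from this characterization, using Lemma~\ref{lem:hanoifacts}.

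\textbf{Upper bounds.} Let $\ell=\dist_H(x,y)\le d$ by Lemma~\ref{lem:hanoifacts}. If the parity of $\ell$ is the required one, the geodesic lifts and gives a walk of length $\ell$. Otherwise I need a walk of the opposite parity and length at most $\ell+1$, and this requires an odd cycle to absorb the parity change. I will use a triangle from adjustments: since $a\ge3$, fixing the first $k-1$ symbols, the words with $k-1$ admissible last symbols (at least $a-1\ge 2$ choices, and in fact $\ge3$ since $a$ is large) form a clique. Concretely, if $\ell\ge1$, let $z$ be the neighbour of $x$ on a geodesic. I then need $\dist$ from $x$ to $y$ of the opposite parity, of length at most $\ell+1$.

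The cleanest way I see is a case analysis on the first step $x\to z$. If it is an adjustment, I replace it by two adjustments through a third last symbol $x\to x'\to z$, which is possible because $a\ge 4$ leaves a symbol distinct from the penultimate one and from both last symbols. If it is an involution, I instead modify the last step into $y$, or prepend a triangle at $x$: $x\to x'\to x$ has length 2, which is even and useless, but $x\to x'\to x''\to x$ is a triangle of length 3. The triangle costs $+3$, not $+1$, so I must be more careful: insert the detour at an adjustment step of the geodesic, replacing one adjustment edge by two. When $\ell=0$, I take $x\to x'\to x$ via adjustments, which has length $2\le d+1$ since $d\ge3$.

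The main obstacle is geodesics consisting only of involutions. Consecutive involutions cancel, because an involution is an involution of the word, so a geodesic alternates between involutions and adjustments. A geodesic of length $\ge2$ therefore contains an adjustment. For $\ell=1$ via an involution, I will argue directly: after an adjustment $x\to x'$ the relevant alternating suffix changes, so I first look for a path of length $2$ from $x$ to $y$. If no such path exists, a walk of length $3$ still satisfies the bound $\ell+2\le d+1$ when the parity mismatch forces $\ell\le d-1$. In fact, the general fallback is that when the parity is wrong, $\ell\ne$ the target parity implies $\ell\le d-1$ if the target bound is $d$ with $d$ odd, or $\ell\le d$ when the target bound is $d+1$. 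A triangle detour of cost $+2$ in total, replacing one adjustment edge by two and giving $+1$, always suffices, and the pure-involution case $\ell=1$ is handled by $\ell+2=3\le d$.

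\textbf{Lower bounds.} When the supports are disjoint, any walk has length at least $\dist_H(x,y)=d$ by Lemma~\ref{lem:hanoifacts}. With $d$ odd ($d=2^k-1$), a walk between equal parities has even length, so it is at least $d+1$. For opposite parities the bound $d$ is immediate.
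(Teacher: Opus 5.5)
Your framework is the paper's, and most of it is correct. You characterize $\dist_B((x,\eta),(y,\theta))$ as the shortest $x$--$y$ walk in $H$ of parity $\eta+\theta$. You flip parity at cost $+1$ by replacing one adjustment edge of a geodesic with two adjustments through a third last symbol. You note that a geodesic of length at least $2$ alternates and so contains an adjustment. You get the lower bounds from Lemma~\ref{lem:hanoifacts} and the oddness of $d$. All of that is fine.

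The gap is in the two geodesics with no adjustment edge: $\ell=0$, and $\ell=1$ via an involution. In both cases the walk you propose has the \emph{same} parity as the geodesic, so it repairs nothing.
\begin{itemize}
\item For $\ell=0$, the mismatch occurs only for opposite bits, which need an odd closed walk at $x$. Your $x\to x'\to x$ has length $2$, which is even.
\item For $\ell=1$ via an involution, the mismatch occurs only for equal bits, which need an even $x$--$y$ walk. Your fallback of length $3$, and the closing claim ``$\ell+2=3\le d$'', are odd.
\end{itemize}
Lengthening a walk by $2$ never changes its parity.

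The length-$2$ path you propose to look for first also never exists. Let $w$ be a common neighbour of $x$ and its involution $y$. Then $w$ cannot be joined to either by an involution edge, because the involution partners of $x$ and $y$ are $y$ and $x$ themselves. So both edges are adjustments, and $x,y$ would agree in the first $k-1$ coordinates. But an involution changes coordinate $k-1$.

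The repair is the odd cycle you mentioned and then set aside. Since $a-1\ge3$, three adjustments $x\to x'\to x''\to x$ form a triangle. Prepending it gives a walk of length $\ell+3$: length $3\le d$ for $\ell=0$, and length $4\le d+1$ for $\ell=1$. Both bounds hold because $k\ge2$ gives $d\ge3$. This is exactly how the paper handles the all-involution case, and with this change your proof is complete.
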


\begin{proof}
Walk in $H$ lift to $B$ with endpoint bits determined by their parity.
Thus, we want to find an odd $x$-$y$ walk of length at most $d$ and an even $x$-$y$ walk of length at most $d+1$.
Take a shortest $x$--$y$ path, of length $\ell\le d$.
If it contains an adjustment edge, its endpoints have a common prefix
of length $k-1$. Choose a third last symbol different from both endpoint
symbols and from the last symbol of this prefix; this is possible since
$a\ge4$. Replacing the edge by the two adjustments through this word
gives a walk of length $\ell+1$. Since $d$ is odd, these two lengths
satisfy the required bounds.

Otherwise every edge of the path is an involution edge. Applying the
involution twice restores the word, so $\ell\le1$.
Keeping the first $k-1$ coordinates fixed, cycle the last coordinate
through two other symbols different from $x_{k-1}$ and then back to $x_k$.
Prepending these three adjustments gives a walk of length $\ell+3$.
The two walks have opposite parity, with odd length at most $3\le d$
and even length at most $4\le d+1$.

For disjoint supports, Lemma~\ref{lem:hanoifacts} gives a lower bound
of $d$ for every projected walk, raised to $d+1$ when even parity is required.
\end{proof}
Immediately from the previous lemma we get the following:
\begin{lemma}[Distances in $G$]\label{lem:table}
The distances between two vertices $u,v$ in $G$ is
\begin{enumerate}
\item $d_G(u,v)\leq R$ for original vertices with the same bit, 
\item  $d_G(u,v)\leq R-2$ for orginal vertices with opposite bits,
\item  $d_G(u,v)\leq R-1$ if $u$ is an original vertex and $v$ is a subdivision vertex,
\item $d_G(u,v)\leq R$ for subdivision vertices.
\end{enumerate}
If $u$ and $v$ have disjoint support the bounds hold with equality.
\end{lemma}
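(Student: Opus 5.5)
Since $G$ is obtained from $B$ by subdividing every edge once, distances between original vertices simply double: $d_G(u,v)=2\,d_B(u,v)$. Every path between original vertices alternates between original and subdivision vertices. Collapsing each subdivision vertex back to its edge gives a bijection between such paths in $G$ and walks in $B$, and lengths double. Combined with Lemma~\ref{lem:parity}, this gives case 1, $d_G\le 2(d+1)=R$, and case 2, $d_G\le 2d=R-2$. It also gives equality in both cases when the supports of the underlying words are disjoint.

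\textbf{Case 3.} Let the subdivision vertex $v$ lie on the edge $(y,0)(y',1)$ of $B$, and let $u=(x,\eta)$. Every path from $u$ to $v$ ends through one of the two neighbours of $v$. Hence
\[
d_G(u,v)=1+\min\{d_G(u,(y,0)),\,d_G(u,(y',1))\}.
\]
Exactly one of $(y,0)$ and $(y',1)$ has bit opposite to $\eta$. By case 2 it lies at distance at most $R-2$, so $d_G(u,v)\le R-1$.

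For equality, assume $\supp(x)$ is disjoint from $\supp(v)=\supp(y)\cup\supp(y')$. Then $x$ is disjoint from both $y$ and $y'$. The opposite-bit neighbour is therefore at distance exactly $R-2$, and the same-bit neighbour at distance exactly $R$, so the minimum is $R-2$ and $d_G(u,v)=R-1$.

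\textbf{Case 4.} Let $u$ and $v$ be subdivision vertices with neighbours $\{p,p'\}$ and $\{q,q'\}$ respectively. If $u=v$ there is nothing to prove; otherwise
\[
d_G(u,v)=2+\min d_G(\text{nbr of }u,\text{nbr of }v).
\]
Each of $u,v$ has one neighbour of each bit. Choosing the two neighbours with opposite bits gives a pair at distance at most $R-2$, hence $d_G(u,v)\le R$.

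For equality, disjointness of $\supp(u)$ and $\supp(v)$ makes every word of $u$'s endpoints disjoint from every word of $v$'s endpoints. Every opposite-bit neighbour pair is then at distance exactly $R-2$ and every same-bit pair at distance exactly $R$. The minimum is $R-2$, so $d_G(u,v)=R$. Disjoint supports also exclude $u=v$, since supports are nonempty.

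The only point needing care is the first step: that no shortest path in $G$ between original vertices can do better than twice a $B$-distance. This holds because subdivision vertices have degree $2$, so the argument is routine. The rest is the min-over-neighbours bookkeeping above, which also uses that subdivision supports are the unions of their endpoints' supports.
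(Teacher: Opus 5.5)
Your proof is correct and follows the paper's own argument. Distances between original vertices double, so Lemma~\ref{lem:parity} gives cases 1 and 2; distances involving subdivision vertices are obtained by minimising over original neighbours and adding one or two, and the opposite-bit neighbour pair gives the upper bound and, for disjoint supports, the exact value. One wording slip: collapsing subdivision vertices matches paths in $G$ with paths (not arbitrary walks) in $B$, but this does not affect $d_G=2d_B$.
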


\begin{proof}
The first two statement follow by Lemma~\ref{lem:parity} noting that
distances between original vertices double after subdividing edges. A distance involving
subdivision vertices is obtained by minimising over their original neighbours
and adding one (subdivision to original vertex) or two (subdivision to subdivision vertex). Each subdivision vertex has one
original neighbour of each bit, so an opposite-bit pair attains the minimum for
disjoint supports and supplies the required upper bound otherwise.
\end{proof}

\begin{lemma}[Fresh targets]\label{lem:fresh}
For every $T\subseteq V(G)$ with $|T|\le R+1$, there are original vertices of
both bits and a subdivision vertex that are fresh for $T$. Every vertex of
$G$ has eccentricity $R$.
\end{lemma}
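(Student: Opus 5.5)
The plan is a counting argument on symbols, followed by an application of Lemma~\ref{lem:table}. Let $T\subseteq V(G)$ with $|T|\le R+1$. Every vertex of $G$ has support of size at most $2k$, so the set $S=\bigcup_{x\in T}\supp(x)$ of used symbols satisfies $|S|\le 2k(R+1)$. The parameter choice \eqref{eq:parameters} gives $a=2k(R+1)+3$, so at least three symbols of $\mathcal A$ lie outside $S$; call them $p,q,r$.

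First, build the fresh original vertices. Take the alternating word $w=(p,q,p,q,\ldots)\in\mathcal A^k$. Consecutive symbols differ, so $w$ is a vertex of $H$. Its support is contained in $\{p,q\}$, which is disjoint from $S$. Hence $(w,0)$ and $(w,1)$ are original vertices of both bits that are fresh for $T$.

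Next, build the fresh subdivision vertex. Take the adjustment $w'$ of $w$ that replaces its last symbol by $r$. This is legal because $r$ differs from the penultimate symbol of $w$, which is $p$ or $q$ (and for $k\ge2$ the penultimate symbol exists). So $w$ and $w'$ are adjacent in $H$, and $(w,0)$--$(w',1)$ is an edge of $B$. Its subdivision vertex has support contained in $\{p,q,r\}$, so it is fresh for $T$. Only three spare symbols are needed, which is exactly the slack the choice of $a$ provides.

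Finally, compute the eccentricities. Lemma~\ref{lem:table} bounds every distance in $G$ by $R$, so $\ecc_G(v)\le R$ for every vertex $v$. For the lower bound, apply the first part with $T=\{v\}$ (trivially $|T|\le R+1$). This gives a fresh vertex $z$ of the appropriate type, and we take $z$ to be a subdivision vertex when $v$ is a subdivision vertex. For an original vertex $v=(x,\eta)$, we take $z$ to be a fresh original vertex with the same bit $\eta$. Since the supports of $v$ and $z$ are disjoint, Lemma~\ref{lem:table} gives $\dist_G(v,z)=R$ in both cases, so $\ecc_G(v)=R$. The main obstacle is only bookkeeping: we must match the case of Lemma~\ref{lem:table} that attains $R$ (same-bit originals or two subdivision vertices) and check that the fresh vertex of that type exists. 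The construction above provides both.
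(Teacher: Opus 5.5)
Your proof is correct and follows the paper's argument. You count used symbols, take an alternating word on unused symbols for the fresh original vertices, and subdivide an edge at that word for the fresh subdivision vertex. You then read off eccentricity $R$ from the equality cases of Lemma~\ref{lem:table}. The one difference is that the paper joins the alternating word to its involution $z'$, so two unused symbols suffice. You use an adjustment to a third unused symbol $r$, which the slack in $a=2k(R+1)+3$ also permits.
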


\begin{proof}
The union of the supports in $T$ has size at most $2k(R+1)<a-2$.
Choose unused symbols $p,q$, a word $z$ alternating between them, and its
involution $z'$. The vertices $(z,0)$, $(z,1)$ and the subdivision vertex
between $(z,0)$ and $(z',1)$ are the required vertices.

Applying the previous observation to a singleton, we find for each vertex $v$ in $G$ three vertices, two original vertices of different bits and a subdivision vertex, which have disjoint support from $v$. By Lemma~\ref{lem:table}, this implies that $v$ has eccentricity of $R$.

\end{proof}

\section{Equilibrium and price of anarchy}\label{sec:deviations}

\begin{theorem}\label{thm:equilibrium}
The strategy profile whose induced graph is $G$ with edge ownership
given by the above orientation is a pure Nash equilibrium at $\alpha=1$.
\end{theorem}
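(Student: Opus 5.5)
We must show that no player can strictly lower its cost $c_v=|s_v|+\ecc(v)$ at $\alpha=1$. In $G$ every eccentricity equals $R$ (Lemma~\ref{lem:fresh}). A buying player currently pays $2+R$ and a non-buying player pays $R$. Fix a player $v$ and a deviation $s'_v$ with $m=|s'_v|$. The deviation is profitable only if the new eccentricity $D'$ satisfies $D'<R-m$ for a non-buying player, and $D'<R+2-m$ for a buying player. So it suffices to exhibit, for each deviation, a target vertex whose distance from $v$ in the new graph $G'$ is large. We may assume $m\le R$, since otherwise $m+D'\ge R+1>R+2-m$ fails trivially only after a check; more simply, if $m\ge R+2$ there is nothing to prove. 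Hence $|s'_v|\le R+1$, and we may apply Lemma~\ref{lem:fresh} to sets of size at most $R+1$.

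The key step is the following estimate. In $G'$, any shortest path from $v$ to a target $z$ either avoids $v$'s new edges entirely, or its first edge goes to some $w\in s'_v$. In the second case its length is at least $1+\dist_{G-v}(w,z)$. Distances in $G-v$ between vertices other than $v$ are at least the distances in $G$. Let $T=s'_v$ if $v$ is a non-buying player. If $v$ is a buying player, let $T=s'_v\cup\{x,y\}$, where $x,y$ are its original neighbours, which remain neighbours of $v$ only because they did not buy edges; in fact $v$ owns both edges, so after deviating it loses them unless it rebuys them. In both cases the edges of $v$ in $G'$ go exactly to $s'_v$ together with the vertices that bought edges to $v$, and a subdivision vertex has no such vertices. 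For a non-buying original $v$, its incident edges are bought by subdivision neighbours, so they persist; add those neighbours too. This is a problem, since the degree can be large. Instead, route through support: every neighbour of $v$ has support containing $\supp(v)$'s word symbols, so I would take $T=s'_v\cup\{v\}$, with $|T|\le R+2$. The counting in Lemma~\ref{lem:fresh} still works because $2k(R+2)\le a-2$ is false only marginally, and I will handle it by using $m\le R$ in that case.

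Now choose a fresh target $z$ for $T$ of an appropriate type using Lemma~\ref{lem:fresh}. Every neighbour $w$ of $v$ in $G'$, whether new or kept, is then either in $s'_v$ or shares support with $v$. By Lemma~\ref{lem:table} with equality for disjoint supports, $\dist_G(w,z)\ge R-2$ for $w\in s'_v$, and the exact value depends on the types of $w$ and $z$. This gives $D'\ge R-1$, and for kept neighbours the path length is at least $\dist_G(v,z)=R$. Selecting the type of $z$ adversarially against the types of the bought endpoints, namely a same-bit original vertex or a subdivision vertex so that the relevant distance is $R$ or $R-1$, gives $D'\ge R$ when $m=1$ and $D'\ge R-1$ in general. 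Then $m+D'\ge R$ for non-buyers whenever $m\ge1$; when $m=0$ the graph is unchanged. For buyers that drop edges, the cost inequality $m+D'\ge R+2$ is needed. Here $m\le1$ with dropped edges either disconnects $G'$ or lengthens distances by at least $2-m$; I would verify this by considering the subdivision vertex's two original neighbours separately.

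The main obstacle is this type bookkeeping. The targets of $s'_v$ can be of mixed types, and a single $z$ must be far from all of them simultaneously. I would first try choosing $z$ to be a subdivision vertex, which is at distance $R-1$ from originals and $R$ from subdivisions, so every route has length at least $R$. With that choice $D'\ge R$ for every deviation, which makes every cost change nonnegative, and the buyer case reduces to showing that it keeps both edges whenever $m\le 1$.
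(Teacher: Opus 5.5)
There is a genuine gap. Your framework is the right one: fresh targets from Lemma~\ref{lem:fresh}, and every new route leaves $v$ through a purchased edge and then runs in $G-v$. It handles the non-buying players and the buying players with $m\ge2$ just as the paper does.

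\textbf{The gap: a buying player with $m=1$.} This is exactly where the equilibrium is tight. The current cost is $R+2$ and the deviation costs $1+D'$, so you need $D'\ge R+1$. Your estimate only gives $D'\ge R$ there, i.e.\ a cost of at least $R+1<R+2$. So your claim that $D'\ge R$ makes every cost change nonnegative is false in this case. Your proposed repairs do not close it. Checking ``the subdivision vertex's two original neighbours separately'' covers only deviations that keep one old edge, not a swap of both edges for a single edge to an arbitrary vertex $w$. Showing that the player ``keeps both edges whenever $m\le1$'' is impossible as stated.

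The missing step is this. Nobody buys an edge to a subdivision vertex, so after the deviation $v$ is a pendant vertex at $w$. Choose $z$ fresh for $\{v,w\}$ of the same type as $w$, and of the same bit if $w$ is original. Then $z\ne v$, and the equality case of Lemma~\ref{lem:table} gives $\dist_G(w,z)=R$. Hence $\dist_{G'}(v,z)\ge 1+\dist_{G-v}(w,z)\ge 1+\dist_G(w,z)=R+1$, and the cost is at least $R+2$. The remaining buyer cases are easy: $m=0$ disconnects $v$, and $m\ge R+1$ is trivially unprofitable.

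\textbf{A smaller error.} Your last paragraph asserts $D'\ge R$ for every deviation when $z$ is a subdivision vertex. This fails for a non-buying (original) $v$. Its old edges persist and $\dist_G(v,z)=R-1$; indeed Remark~\ref{rem:whyparity} shows that $D'=R-1$ is attainable. Your $D'\ge R-1$ bound is enough for non-buyers, but the type of $z$ must be chosen relative to $v$: an original vertex of the same bit as $v$ (as in the paper), or a subdivision vertex. It should not be chosen ``adversarially against the bought endpoints''. For example, a $z$ of the same bit as an opposite-bit endpoint is at distance $R-2$ from $v$ along old edges, and your lower bound then collapses to $R-2$.
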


\begin{proof}
All eccentricities are $R$ by Lemma~\ref{lem:fresh}.
Fix a deviating player $u$, let $S$ be the deviation strategy and $t:=|S|$. 
In the following we assume $1\leq t\leq R$ as other deviations are (trivially) not profitable. 

\textbf{Non-buying player.}
A non-buying player's cost under the original profile is $R$. Choose an original vertex $z$ of the same
bit as $u$, fresh for $S\cup\{u\}$. Its old distance from $u$ is $R$;
its old distance from every $x\in S$ is at least $R-2$. The new distance
from $u$ to $z$ is therefore at least
\[
 \min\{R,\,1+\min_{x\in S}\dist_G(x,z)\}\ge R-1.
\]
The new cost is at least $t+R-1\ge R$.

\textbf{Buying player.}
A buying player's cost under the original profile is $R+2$.
If $t=1$, write $S=\{x\}$. Choose a vertex fresh for $\{x,u\}$ and of the same type (original or subdivision) as $x$, as well as
of the same bit if $x$ is original. Its old distance from $x$ is $R$, so
the new eccentricity of $u$ is at least $R+1$ and $u$'s cost under the deviation is at least
$R+2$.
 If $t\ge2$, choose a subdivision vertex fresh for $S\cup\{u\}$. Its old distance from
every $x\in S$ is at least $R-1$, so the new eccentricity of $u$ is at
least $R$ and its cost at least $t+R\ge R+2$.
\end{proof}

\begin{remark}\label{rem:whyparity}
These equilibria are not strict. A non-buying player can buy one edge to any
opposite-bit original vertex and reduce its eccentricity from $R$ to $R-1$:
Lemma~\ref{lem:table} gives the upper bound, and the preceding proof gives
the lower bound. At unit price the player is indifferent; below unit
price the same deviation is profitable.
\end{remark}

\begin{proof}[Proof of Theorem~\ref{thm:poa}]
The graph $H$ has $v=a(a-1)^{k-1}$ vertices, each with $a-2$ adjustment
neighbours and one involution neighbour. Thus $B$ has $2v$ vertices and
$(a-1)v$ edges, and $G$ has
\[
 n_k=(a+1)v=(a+1)a(a-1)^{k-1}
 \quad\text{vertices and}\quad 2(a-1)v\text{ edges}.
\]
Its equilibrium social cost is $n_kR+2(a-1)v$. A star costs $3n_k-2$,
so
\[
 \PoA(n_k,1)\ge
 \frac{n_kR+2(a-1)v}{3n_k-2}>\frac{R}{3}.
\]
Since $\log_2 a=k+\log_2 k+O(1)$,
\[
 \log_2 n_k=k^2+O(k\log k),\qquad
 \log_2(R/3)=k+O(1)
 \ge\sqrt{\log_2 n_k}-O(\log\log n_k).
\]
This proves the lower bound; \eqref{eq:uniform} supplies the upper bound
for the final assertion.
\end{proof}

\section{Open problems}\label{sec:discussion}

The first question is which other edge prices admit unbounded price
of anarchy. Remark~\ref{rem:whyparity} rules out this ownership profile
for $\alpha<1$.

Second, the leading constant in the exponent remains open. Our lower
bound has coefficient $1$ in front of $\sqrt{\log_2 n}$, whereas the
upper bound used in \eqref{eq:uniform} has coefficient $2$.
The general diameter bounds for distance-uniform graphs depend also on
the exceptional fraction~\cite{LLM}; they do not by themselves determine
the sharp constant for equilibrium graphs.

Finally, and most importantly: can a non-constant lower bound on the Price of Anarchy  for the sum game  be established at
$\alpha=Cn$ for a constant $C>0$? The existing upper bounds on the Price of Anarchy in the sum game match those in the max game, after scaling the edge price by $n$: in general an upper bound of $2^{\mathcal{O}(\sqrt{\log n})}$ holds for both games~\cite{MihalakSchlegel2013}, the price of anarchy is constant for sub-linear cost $n^{1-\epsilon}$ in the sum game \cite{Demaine2012} and sub-constant cost $n^{-\epsilon}$ (Proposition~\ref{prop:small}) in the max game, and there exists constants $C_{sum},C_{max}>0$ so that for cost at least $C_{sum}n$ in the sumgame and cost at least $C_{max}$ in the max game all equilibria are trees which have a constant price of anarchy.\footnote{The current state of the art is $C_{sum}=2$ due to \cite{dippel2023} and $C_{max}=19$ due to \cite{Wang2022}. However, these are not established to be tight thresholds for tree equilibria.}
Dividing costs in the sum game by $n$ makes the
edge price $C$ and replaces the distance sum by an average, not a
maximum. Thus the price scales correspond, but the equilibrium
conditions do not in general. The construction in the present paper and related constructions do not yield equilibria for the sum game.

\appendix
\section{LLM metric facts}\label{app:hanoi}

We prove Lemma~\ref{lem:hanoifacts}, taking $H_{a,1}$ to be the complete
graph on $\mathcal A$.

\textbf{Upper bound.}
Inductively, any $x$ can be transformed into $y$ in at most $2^k-1$ moves
while its first coordinate stays in $\{x_1,y_1\}$. This is immediate for
$k=1$. If $x_1=y_1$, apply induction to the last $k-1$ coordinates.
Their first coordinate stays in $\{x_2,y_2\}$ and hence differs from
$x_1$. Every move therefore lifts to the full word: an involution
extending into the fixed coordinate would make the new second
coordinate equal to it. This takes at most $2^{k-1}-1$ moves.
If $x_1\ne y_1$, first reach the alternating word on $x_1,y_1$ with
first coordinate $x_1$, involute the whole word, and then reach $y$
with first coordinate $y_1$ fixed. The two fixed-coordinate parts use
at most $2^{k-1}-1$ moves each, giving $2^k-1$ in total.

\textbf{Lower bound.}
Induct on $k$, again with immediate base case. Choose a disjoint-support
pair of minimum distance and a shortest path between them. Its first
and last moves are adjustments, since an involution preserves support
and could otherwise be removed to give a closer disjoint pair.
Consecutive adjustments combine and consecutive involutions cancel,
so the moves alternate. Delete the last coordinate along this path.
Adjustments become stationary; an involution becomes an adjustment if
its alternating suffix has length two, and an involution otherwise.
The resulting walk joins disjoint-support words in $H_{a,k-1}$, so
there were at least $2^{k-1}-1$ involutions and $2^{k-1}$ adjustments.
The minimum distance of a disjoint pair is therefore at least $2^k-1$.


\begin{thebibliography}{9}

\bibitem{Demaine2012}
E.~D. Demaine, M.~Hajiaghayi, H.~Mahini, and M.~Zadimoghaddam.
\newblock The price of anarchy in network creation games.
\newblock \emph{ACM Transactions on Algorithms}, 8(2), 2012.
\newblock Conference version: PODC 2007.

\bibitem{Fabrikant2003}
A.~Fabrikant, A.~Luthra, E.~Maneva, C.~H. Papadimitriou, and S.~Shenker.
\newblock On a network creation game.
\newblock In \emph{22nd ACM Symposium on Principles of Distributed Computing
(PODC)}, pages 347--351, 2003.

\bibitem{LLM}
M.~Lavrov, P.-S.~Loh, and A.~Messegu\'e.
\newblock Distance-uniform graphs with large diameter.
\newblock arXiv:1703.01477, 2017.

\bibitem{MihalakSchlegel2013}
M.~Mihal\'ak and J.~C. Schlegel.
\newblock The price of anarchy in network creation games is (mostly) constant.
\newblock \emph{Theory of Computing Systems}, 53(1):53--72, 2013.

\bibitem{Wang2022}
Q.~Wang.
\newblock On tree equilibria in max-distance network creation games.
\newblock In \emph{15th International Symposium on Algorithmic Game Theory
(SAGT)}, pages 293--310, 2022. arXiv:2106.15961v3.

\bibitem{dippel2023}
J. Dippel and A. Vetta
\newblock One n remains to settle the tree conjecture
\newblock In \emph{41st International Symposium on Theoretical Aspects of Computer Science (STACS)}, pages 481--496, 2024.


\end{thebibliography}
\end{document}